\newtheorem{tlem}{Lemma}
\newtheorem{tthm}[tlem]{Theorem}
\newtheorem{tcor}[tlem]{Corollary}
\def \ie {i.e.~}
\journal{Computers \& Operations Research}
\begin{document}

\begin{frontmatter}

\title{A DSATUR-based algorithm for the Equitable Coloring Problem\tnoteref{grant}}

\author[a]{Isabel M\'endez-D\'iaz\fnref{correspon}}
\author[b,c]{Graciela Nasini}
\author[b,c]{Daniel Sever\'in}

\address[a]{ FCEyN, Universidad de Buenos Aires,
	Argentina }

\address[b]{ FCEIA, Universidad Nacional de Rosario,
	Argentina }

\address[c]{ CONICET, Argentina }

\tnotetext[grant]{Partially supported by grants PIP-CONICET 241, PICT 2011-0817 and UBACYT 20020100100666. 
\emph{E-mail addresses}: \texttt{imendez@dc.uba.ar} (I. M\'endez-D\'iaz),
\texttt{nasini@fceia.unr.edu.ar} (G. Nasini), \texttt{daniel@fceia.unr.edu.ar} (D. Sever\'in).}
\fntext[correspon]{Corresponding author at Departamento de Ciencias de la Computaci\'on, Facultad de Ciencias Exactas y Naturales, Universidad de Buenos Aires,
Intendente Guiraldes 2160 (Ciudad Universitaria, Pabell\'on 1), Argentina}

\begin{abstract}
This paper describes a new exact algorithm for the Equitable Coloring Problem, a coloring problem
where the sizes of two arbitrary color classes differ in at most one unit.
Based on the well known \textsc{DSatur} algorithm for the classic Coloring Problem,
a pruning criterion arising from equity constraints is proposed and analyzed.
The good performance of the algorithm is shown through computational experiments over random and benchmark instances.
\end{abstract}

\begin{keyword}
equitable coloring, \textsc{DSatur}, exact algorithm
\MSC[2000] 05C15,
           05A15
\end{keyword}

\end{frontmatter}


\section{Introduction} \label{SINTRO}

There exists a large family of combinatorial optimization
problems having relevant practical importance, besides its theoretical
interest. One of the most representative problem of this family is the
\emph{Graph Coloring Problem} (GCP), which arises in many applications
such as scheduling, timetabling, electronic bandwidth allocation and sequencing problems.

Given a simple graph $G = (V, E)$, where $V$ is the set of vertices and $E$ is the set of edges, a \emph{coloring of $G$} is an assignment of colors to
vertices such that the endpoints of any edge have different colors.
A \emph{$k$-coloring of $G$} is a coloring that uses $k$ colors.
The GCP consists of finding the minimum number $k$ such that $G$ admits a $k$-coloring.
This minimum number of colors is called the \emph{chromatic number} of $G$ and is denoted by $\chi(G)$.\\

It is well known that GCP models some scheduling problems. The simplest version considers assignments of workers to a given set of tasks.
Pairs of tasks may conflict each other, meaning that they should not be assigned to the same worker.
The problem is modeled by building a graph containing a vertex for every task and an edge for every conflicting pair of tasks. A coloring of this graph represents a conflict-free assignment and the chromatic
number of the graph is exactly the minimum number of workers needed to perform all tasks.

However, an extra constraint could be required to ensure the uniformity of the distribution of workload employees.
The addition of this extra \emph{equity} constraint gives rise to the
\emph{Equitable Coloring Problem} (ECP), introduced in \cite{MEYER} and motivated by an application concerning \emph{garbage collection} \cite{EXAMPLE2}. 
Other applications of the ECP concern \emph{load balancing problems} in multiprocessor machines \cite{EXAMPLE3}
and results in \emph{probability theory} \cite{EXAMPLE1}. An introduction to ECP and some basic results are provided in
\cite{KUBALE}.

Formally, an \emph{equitable $k$-coloring} (or just $k$-eqcol) of a graph $G$ is a $k$-coloring 
satisfying the \emph{equity constraint}, \ie the size of two color classes can not differ by more than one unit.
The \emph{equitable chromatic number} of $G$, $\chi_{eq}(G)$, is the
minimum $k$ for which $G$ admits a $k$-eqcol. The ECP consists of finding $\chi_{eq}(G)$.

Computing $\chi_{eq}(G)$ for arbitrary graphs is proved to be NP-Hard and
just a few families of graphs are known to be easy such as complete $n$-partite, complete split, wheel and tree graphs \cite{KUBALE}.

There exist some differences between GCP and ECP that make the latter harder to solve.
It is known that the chromatic number of an unconnected graph $G$ is the maximum among the chromatic numbers of its components.
Algorithms that solve GCP can take advantages of the property mentioned above (e.g. \cite{BCCOLBRANCHINGRULE}) by solving GCP
on each component, which is less CPU intensive than address the problem on the whole graph.
Moreover, one can preprocess the graph in order to reduce its size and, consequently, the time of optimization. For example, choosing two non-adjacent vertices with the same neighborhood, known as twin vertices, and deleting one of them. The chromatic number of the graph
remains the same after deletion, since the deleted vertex can inherit the color of the other one.
None of these recipes can be applied when solving ECP. For instance, let $G$ be the graph of Figure $1a$ and $G'$ be the graph compounded of two disjoint copies
of $G$. Then, $\chi_{eq}(G') = 2$ but $\chi_{eq}(G) = 3$. Also, let $H'$ be the graph of Figure $1b$. Clearly, $v$ and $v'$ are twin vertices. Let $H'$ be
$H$ after $v$ is deleted. We have $\chi_{eq}(H') = 2$ but $\chi_{eq}(H) = 3$.\\

\begin{figure}[h]
\begin{center}
\includegraphics[scale=0.4]{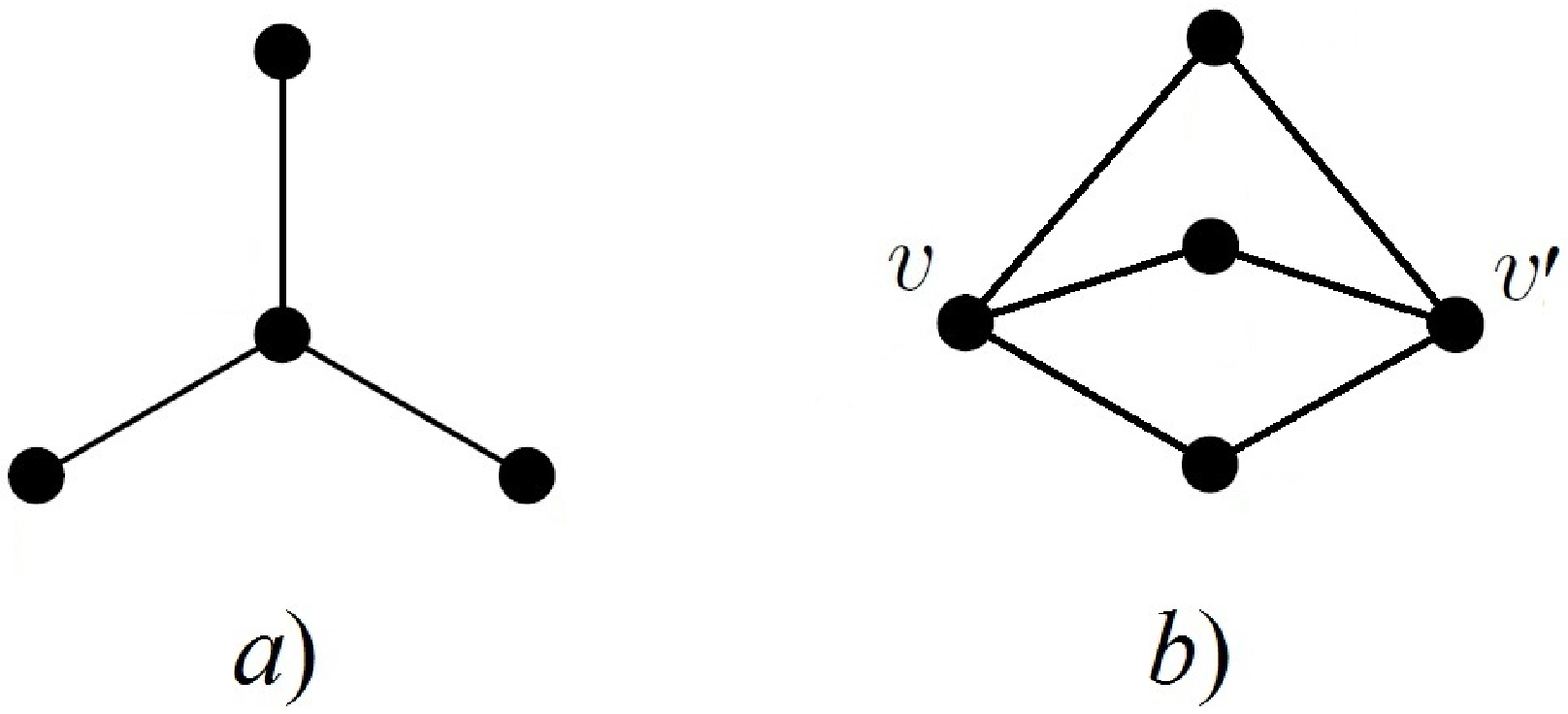}
\end{center}
\vspace{-20pt}
\caption{}
\label{fig:grasu}
\end{figure}

There are very few tools in the literature related to ECP resolution.
Two constructive algorithms called \textsc{Naive} and \textsc{SubGraph} were given in \cite{KUBALE} to generate greedily an equitable coloring of a graph and,
as far as we know, two integer linear programming approaches are available. The first one is a Branch-and-Cut algorithm, called B\&C-$LF_2$
\cite{BYCBRA}, which is based on the asymmetric representatives formulation for GCP described in \cite{REPRESENTATIVES}.
The other one \cite{PAPERDAM} adapts to ECP the formulation and techniques used by M\'endez-D\'iaz and Zabala for GCP in \cite{BCCOLBRANCHINGRULE},
studies its polyhedral structure and derives families of valid inequalities. Some of them have shown to be very effective as cutting planes in preliminary
computational experiments.

Regarding GCP, we can find good exact algorithms which are not based on IP techniques. One of the most well known example is \textsc{DSatur}, proposed by Br\'elaz in \cite{DSATUR}. This Branch-and-Bound algorithm has been referred in the literature several times and is still used by its simplicity, its efficiency in
medium-sized graphs and the possibility of applying it at some stage in metaheuristics or in more complex
exact algorithms like Branch-and-Cut ones \cite{BCCOLBRANCHINGRULE}. Recently, it was shown that a modification of \textsc{DSatur} performs relatively well
compared with many state-of-the-art algorithms based on IP techniques, showing superiority in random instances \cite{PASS}.

This fact encourages us to research how to modify a DSatur-based solver in order to address the ECP, which is the goal of this paper.
Our approach exploits arithmetical properties inherent in equitable colorings and combines them with
the techniques originally developed by
Brown \cite{BROWN} and Br\'elaz \cite{DSATUR} for \textsc{DSatur}, and improved by Sewell \cite{SEWELL} and San Segundo \cite{PASS}.
We call it \textsc{EqDSatur}.
A preliminary version of this algorithm with weaker pruning rules than the one analyzed in this work was already presented in \cite{LAGOS2013}.

The paper is organized as follows.
Section \ref{SDSATUR} gives a brief summary of known DSatur-based algorithms for GCP.
Section \ref{SNEWPRUN} shows the background math for our pruning rule.
Section \ref{SEQIMPL} describes an implementation of \textsc{EqDSatur}.
Section \ref{SBOUNDS} discusses methods for obtaining lower and upper bounds of the equitable chromatic number.
Section \ref{SCOMPU} reports computational experiments carried out to tune up the behaviour of \textsc{EqDSatur},
and compares our algorithm against other ones from the literature. Finally, Section \ref{SCONCLU} gives final conclusions.\\

We now introduce some notations and definitions employed throughout the paper. For any positive integer $k$, $[k]$ denotes the set $\{1,2,\ldots,k\}$. Given a graph $G=(V,E)$, we assume the set of vertices is $V = [n]$. A graph for which every vertex is adjacent to each other is called a \emph{complete graph}.
Given $S \subset V$, we denote by $G[S]$ the subgraph of $G$ \emph{induced by} $S$. A set $Q \subset V$ is a \emph{clique} of $G$ if $G[Q]$ is a complete graph.

Given $u \in V$, the \emph{neighborhood} of $u$ is the set of vertices adjacent to $u$ and is denoted by $N(u)$. The
\emph{closed neighborhood} of $u$, $N[u]$, is the set $N(u) \cup \{u\}$. The \emph{degree of $u$}, $d(u)$, is the cardinality of $N(u)$.
The maximum degree of vertices in $G$ is denoted by $\Delta(G)$.

A \emph{stable set} is a set of vertices of $G$ no two of which are adjacent.
We denote by $\alpha(G)$ the \emph{stability number} of $G$, \ie the maximum cardinality of a stable set of $G$.
Given $S \subset V$, we also denote by $\alpha(S)$ the stability number of $G[S]$.

A \emph{partial $k$-partition} of $G$, denoted by $\Pi=(C_1,C_2,\ldots,C_n)$, is a collection of disjoint sets such that
$\cup_{j=1}^k C_j \subset V$ and $C_j = \varnothing$ if and only if $j \geq k+1$. We write $k(\Pi)$ to refer the number of non-empty sets
in $\Pi$. We denote by $U(\Pi)$ the set of vertices not covered by the sets of $\Pi$, \ie $U(\Pi) = V \backslash\!\cup_{j=1}^k C_j$.
If $U(\Pi) = \varnothing$ we say that $\Pi$ is a \emph{$k$-partition}. Given $v \in V \backslash U$, we denote by $\Pi(v)$ the number
of the set to which $v$ belongs, \ie $v \in C_{\Pi(v)}$. 

A \emph{partial $k$-coloring} of $G$ is a partial $k$-partition $\Pi=(C_1,C_2,\ldots,C_n)$ of $G$ such that each $C_j$ is
a stable set of $G$. In this context, $U(\Pi)$ is called the \emph{set of uncolored vertices} of a partial $k$-coloring $\Pi$.
If $U(\Pi) = \varnothing$ we say that $\Pi$ is a \emph{$k$-coloring}.

Given $v \in V$ and a partial $k$-coloring $\Pi$, let $D_{\Pi}(v)$ be the set of different colors assigned to the adjacent vertices of $v$, \ie
$D_{\Pi}(v) = \{ \Pi(w) : w \in N(v) \backslash U(\Pi) \}$. The \emph{saturation degree} of $v$ in $\Pi$, $\rho_{\Pi}(v)$, is the cardinality of $D_{\Pi}(v)$
and the \emph{set of available colors} of $v$, $F_{\Pi}(v)$, is the set of unused colors in the neighborhood of $v$, \ie $F_{\Pi}(v) = [n] \backslash D_{\Pi}(v)$.

Given a partial $k$-partition $\Pi$, $u \in U(\Pi)$ and $j \in [k+1]$ we denote by $\Pi + \langle u,j \rangle$ to the partial partition
obtained by adding $u$ to $C_j$.

We say that a partial $k$-partition (or partial $k$-coloring) $\Pi = (C_1$, $C_2$, $\ldots$, $C_n)$ \emph{can be extended} to a $k'$-partition (or $k'$-coloring) if there exists a $k'$-partition (or $k'$-coloring) $\Pi' =  (C'_1, C'_2, \ldots, C'_n)$ which can be obtained from $\Pi$ by
succesive applications of the operator ``+''. A direct consequence is that $k \leq k'$ and $C_j \subset C'_j$ for all $j \in [k]$.

We say that a $k$-partition or $k$-coloring $\Pi=(C_1, C_2, \ldots, C_n)$ of $G$ is \emph{equitable} if it satisfies the equity constraint, \ie 
\[ \left| |C_i| - |C_j| \right| \leq 1,~~~~~ \textrm{for}~i, j \in [k]. \]

An equitable $k$-coloring is also called $k$-eqcol for the sake of simplicity.


\section{An overview of DSatur-based algorithms for GCP} \label{SDSATUR}

The idea behind an enumerative algorithm such as \textsc{DSatur} is to determine early whether it is possible to extend a partial
coloring to a proper coloring so that uncolored vertices are painted with available colors. In this way, the
enumerative procedure avoids to explore partial colorings that will not lead to an optimal coloring, and therefore would be needlessly enumerated.

\textsc{DSatur} is based on a generic enumerative scheme proposed by Brown \cite{BROWN}, outlined as follows:\\

\medskip

\noindent \underline{\textsc{Input}}: $G$ a graph, $\Pi_0$ an initial partial coloring of $G$ and $\Pi^*$ an initial coloring of $G$.\\
\\
\noindent \underline{\textsc{Output}}: $\Pi^*$ an optimal coloring of $G$, $UB$ the chromatic number of $G$.\\
\\
\noindent \underline{\textsc{Algorithm}}: Set $UB \leftarrow k(\Pi^*)$. Then, execute $\textsc{Node}(\Pi_0)$.\\
\\
\noindent \underline{\textsc{Node}$(\Pi)$}:\\
\indent \emph{Step 1}. If $U(\Pi) = \varnothing$, set $UB \leftarrow k(\Pi)$, $\Pi^* \leftarrow \Pi$ and return.\\
\indent \emph{Step 2}. Select a vertex $u \in U(\Pi)$.\\
\indent \emph{Step 3}. For each color $j \in [\min\{k(\Pi)+1, UB-1\}]$ such that $j \in F_{\Pi}(u)$:
\indent \indent \indent Set $\Pi' \leftarrow \Pi + \langle u, j \rangle$.\\
\indent \indent \indent If $F_{\Pi'}(v) \cap [UB-1] \neq \varnothing$ for all $v \in U(\Pi')$, execute \textsc{Node}$(\Pi')$.\\

\medskip

The previous scheme only works when the initial partial coloring $\Pi_0$ can be extended to an optimal coloring. A suitable $\Pi_0$ can be
computed as follows: if $Q = \{v_1, v_2, \ldots, v_q\}$ is a maximal clique of $G$, it is known that a $q$-partial coloring $\Pi_0$ such that
$\Pi_0(v_i) = i$ for all $i \in [q]$ can be extended to a $\chi(G)$-coloring.

Indeed, we must know a maximal clique $Q$ and an initial coloring $\Pi^*$ in advance. Moreover, we must state the rule for choosing
vertex $u$ in Step 2 and the order in which colors from $F(u)$ have to be evaluated. From now on, we call to these criteria
\emph{vertex selection strategy} (VSS) and \emph{color selection strategy} (CSS).\\

Br\'elaz proposed the algorithm \textsc{DSatur} \cite{DSATUR} by obtaining a maximal clique $Q$ and an initial coloring $\Pi^*$
with greedy heuristics (one is SLI given in \cite{MATULA} and the other is contributed by himself).
The vertex selection strategy, which we call DSATUR-VSS, selects the uncolored vertex with the largest saturation degree.
In case of a tie, select the vertex with the largest degree. More specifically, let $\rho$ be the maximum saturation degree
of $\Pi$ and $T$ be the so called \emph{set of candidate vertices}:
\[ T = \{ u \in U(\Pi) : \rho_{\Pi}(u) = \rho \}. \] 
DSATUR-VSS chooses $u \in T$ that maximizes $d(u)$.
In the case that more than one vertex in $T$ has the maximum degree, untie them according to some predetermined order, e.g. its number in $V$.

Sewell \cite{SEWELL} suggested a modified tie breaking rule for choosing
$u$ from the set $T$, called \textsc{Celim} (CELIM-VSS). It consists of selecting from the set of vertices tied at maximum saturation degree, the one with the maximum number of common available colors in the neighborhood of uncolored vertices. That is, choose $u \in T$ such that the value 
\[ celim(u) = \sum_{j \in F_{\Pi}(u)} | \{ v \in N(u) \cap U(\Pi) : j \in F_{\Pi}(v) \} | \]
is the highest.

Let us note that, while DSATUR-VSS attempts to estimate future color availability through the degree of vertices, CELIM-VSS also contemplates the impact of coloring a vertex over the uncolored vertices yet. Although CELIM-VSS is more CPU intensive than DSATUR-VSS, fewer nodes are evaluated and,
in the case of medium and high density instances, less time is required to reach the optimality.\\

A further improvement in the vertex selection strategy was recently proposed by San Segundo \cite{PASS}.
The criterion chooses the vertex $u \in T$ that maximizes the value
\[ pass(u) =  \sum_{j \in F_{\Pi}(u)} | \{ v \in N(u) \cap T : j \in F_{\Pi}(v) \} |. \]
By comparing it with Sewell's criterion we may observe that CELIM-VSS minimizes the number of subproblems by systematically reducing available
color at deeper levels of the search tree. By constrast, San Segundo's criterion restricts this computation to the neighbors in the set of tied vertices,
reducing color domains of vertices which are already known to have the least number of available colors, and so therefore more likely to require
a new color at deeper levels of the search tree.

At an early stage of enumeration, the set $T$ has many vertices and the computation of $pass(u)$ induces an overload in the strategy
that, in some cases, worsens the overall performance. In order to prevent this overload, a threshold called $TH$ is introduced by the author.
If $k(\Pi) - \rho \leq TH$, he chooses from the set $T$, the vertex $u$ whose value of $pass(u)$ is the highest. Otherwise,
he chooses the vertex $u$ whose degree is the highest just like DSATUR-VSS. This strategy is called \textsc{Pass} (PASS-VSS).
Several values of this threshold were tested in \cite{PASS} and $TH=3$ was settled as the best option.

This approach proved to be quite competitive with other exact algorithms for GCP from the literature.\\

Regarding the color selection strategy, as far as we know, all DSatur-based implementations merely consider the set of available colors in ascending
order: first evaluate color 1, then color 2, and so on. We call it DSATUR-CSS.\\

Considering the good performance of DSatur-based algorithms for GCP, it is natural to derive an algorithm for ECP consisting of the
previous Brown's scheme by changing the initial coloring in the initialization by an equitable coloring, and checking whether
$\Pi$ is an equitable coloring in Step 1. In summary, this simple algorithm, which we call \textsc{TrivialEqDSatur},
only applies the equity constraint at the leafs of the search tree in the hope that the resulting coloring is equitable. This may cause
\textsc{TrivialEqDSatur} to explore vast regions of the search tree that will not lead to equitable colorings.

Nevertheless, the exploration of useless nodes could be avoided by checking, at each node, whether a partial coloring can be extended to an equitable coloring.
In the next section, we study necessary and sufficient conditions for a partial coloring to be extended to an equitable coloring and how to implement
it as part of a DSatur-based algorithm.


\section{A pruning rule for the ECP} \label{SNEWPRUN}

We now study arithmetical properties of the sizes of color classes in equitable colorings and how to combine them in order to
propose a pruning rule for our algorithm.

From now on, for a partial $k$-partition $\Pi = (C_1, C_2, \ldots, C_n)$, let $M(\Pi)$ be the largest color class in $\Pi$,
$T(\Pi)$ be the index of color classes in $\Pi$ with size $M(\Pi)$, and $t(\Pi)$ be the cardinality of $T(\Pi)$, \ie
$M(\Pi) = \max \{|C_j| : j \in [k]\}$, $T(\Pi) = \{j \in [k] : |C_j| = M(\Pi) \}$ and $t(\Pi) = |T(\Pi)|$.

The following result fully characterizes when a partial partition can be extended to an equitable partition.

\begin{tthm}
Let $\Pi$ be a partial $k$-partition, $M = M(\Pi)$ and $t = t(\Pi)$.
Then, $\Pi$ can be extended to an equitable partition if and only if
\begin{equation} \label{SUPERMRAW}
  n \geq \bigl( M - 1 \bigr) \cdot k + t
\end{equation}
\end{tthm}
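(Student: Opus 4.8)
The plan is to prove both directions. For necessity, suppose $\Pi$ can be extended to an equitable $k'$-partition $\Pi'$. Since color classes only grow under the operator ``+'', each of the $t$ classes of size $M$ in $\Pi$ keeps at least $M$ vertices in $\Pi'$, so the largest class of $\Pi'$ has size at least $M$. If $\Pi'$ uses $k' \geq k$ colors and is equitable, every class has size at least $M-1$ (being within one unit of the largest, which is $\geq M$), and at least $t$ of them — the ones that were already full in $\Pi$ — have size at least $M$. Hence $n = \sum_{j=1}^{k'} |C'_j| \geq (M-1)(k'-t) + M\cdot t = (M-1)k' + t \geq (M-1)k + t$, which is \eqref{SUPERMRAW}. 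A small point to check here is that if $M=1$ the inequality $n \geq t$ is trivially true and the argument degenerates gracefully; and that padding $\Pi$ with empty classes up to $k$ colors (if $k(\Pi)<k$) does not affect $M$ or $t$.

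For sufficiency, assume $n \geq (M-1)k + t$. The idea is to construct an equitable extension explicitly by a greedy ``fill the smallest class'' procedure: repeatedly take an uncolored vertex from $U(\Pi)$ and place it in a currently smallest color class among $C_1,\ldots,C_k$ (opening no new classes), until $U(\Pi)=\varnothing$. First I would argue this is well-defined as a sequence of ``+'' operations. Then I would show the resulting $k$-partition $\Pi'$ is equitable. The key invariant is that after each step the class sizes differ by at most one OR there is still ``room'' guaranteed by the counting hypothesis; more precisely, since we always add to a minimum-size class, the multiset of sizes stays as balanced as possible: at every stage $\max_j|C_j| - \min_j|C_j| \leq 1$ unless the configuration is forced, and the hypothesis $n \geq (M-1)k+t$ is exactly what guarantees we never get stuck with the maximum stuck at $M$ while some class lags at $M-2$. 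Concretely, the final sizes will be: some classes of size $q$ and some of size $q+1$ where $q = \lfloor n/k \rfloor$, EXCEPT we must respect that the $t$ pre-full classes already have $M$ vertices — so I need $M \leq \lceil n/k \rceil$, and if $M = \lceil n/k \rceil$ then at most $t$ classes can sit at that value, i.e. $t \leq n - k\lfloor n/k\rfloor$ when $n/k$ is non-integral (or $M\le\lfloor n/k\rfloor$ when it divides evenly). Unwinding these arithmetic constraints, they are equivalent to \eqref{SUPERMRAW}.

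The main obstacle is the sufficiency direction, specifically making the greedy/counting argument airtight in the boundary cases where $M$ already equals the ceiling of the average, so that the $t$ overfull classes cannot be ``balanced down'' and the slack in \eqref{SUPERMRAW} must be spent exactly. I expect the cleanest route is not to track the greedy process step by step but to directly exhibit the target size vector: choose target sizes $(s_1,\ldots,s_k)$ with $s_j \in \{\lfloor n/k\rfloor, \lceil n/k\rceil\}$, $\sum s_j = n$, and $s_j \geq |C_j|$ for all $j$ — showing such a vector exists is a finite arithmetic lemma that reduces precisely to \eqref{SUPERMRAW} — and then fill each $C_j$ up to $s_j$ using the $|U(\Pi)| = n - \sum|C_j|$ uncolored vertices, which is possible since $\sum(s_j - |C_j|) = |U(\Pi)|$. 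The resulting partition is equitable by construction. A minor bookkeeping nuisance is reconciling the definition of partial partition (classes $C_j$ empty iff $j \geq k(\Pi)+1$) with allowing the extension to fill formerly-empty classes indexed below $k$; I would note this is permitted by the ``+'' operator and the definition of ``can be extended'', and handle the case $k(\Pi) < k$ by observing that filling those empty classes is covered by the same target-vector construction.
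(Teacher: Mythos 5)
Your proof is correct and follows essentially the same route as the paper: the necessity direction is the identical class-size count $n \geq M\cdot t + (M-1)(k-t)$, and the sufficiency direction is the paper's own greedy ``add to the smallest class'' procedure, which the paper states in a single line without the counting you supply. Your explicit target-size-vector formulation of the sufficiency step is sound and simply makes rigorous what the paper leaves implicit.
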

\begin{proof}
Clearly, if $\Pi$ can be extended to an equitable partition $\Pi'$, then the classes from $T(\Pi)$ in $\Pi'$ must have at least $M$ vertices.
Consequently, the classes from $[k] \backslash T(\Pi)$ in $\Pi'$ must have at least $M - 1$ vertices.
Then, $n \geq M\cdot t + (M-1)\cdot(k-t)$ which is equivalent to (\ref{SUPERMRAW}).

On the other hand, if (\ref{SUPERMRAW}) holds then $U(\Pi)$ has enough vertices for the following procedure to get an equitable
$k$-partition: add one by one the remaining uncolored vertices to the smallest non-empty class at each step.
\end{proof}

Formula (\ref{SUPERMRAW}) allows us to obtain another way of characterizing equitable colorings besides the traditional definition:

\begin{tcor}
Let $\Pi$ be a $k$-coloring of $G$, $M = M(\Pi)$ and $t = t(\Pi)$. Then, $\Pi$ is a $k$-eqcol if and only if (\ref{SUPERMRAW}) holds.
\end{tcor}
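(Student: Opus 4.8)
The corollary is essentially a special case of the preceding theorem, so the plan is to reduce it to that result. Note that a $k$-coloring $\Pi$ is, in particular, a partial $k$-partition with $U(\Pi) = \varnothing$. First I would observe that such a $\Pi$ is equitable (in the sense of satisfying $\bigl||C_i| - |C_j|\bigr| \le 1$ for all $i,j \in [k]$) if and only if it is its own extension to an equitable partition: indeed, since no uncolored vertices remain, the only $k'$-partition obtainable from $\Pi$ via the ``$+$'' operator is $\Pi$ itself, so ``$\Pi$ can be extended to an equitable partition'' collapses to ``$\Pi$ is already an equitable partition''.

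With that identification in hand, applying the theorem directly yields: $\Pi$ is an equitable $k$-partition if and only if $n \ge (M-1)\cdot k + t$, which is exactly \eqref{SUPERMRAW}. The only remaining point is to confirm that an equitable $k$-coloring in the sense of the corollary (a $k$-coloring satisfying the equity constraint) is the same object as an equitable $k$-partition whose classes happen to be stable sets; this is immediate from the definitions of partial $k$-coloring and equitable partition given in the Introduction, since the stability of each $C_j$ is inherited from $\Pi$ being a coloring and plays no role in the equity condition.

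I expect no real obstacle here — the content is entirely bookkeeping about when the ``can be extended'' relation degenerates. The only place to be slightly careful is making the trivial direction of the extension argument explicit (that a partition with $U(\Pi)=\varnothing$ extends only to itself), so that the ``if and only if'' of the theorem transfers cleanly to an ``if and only if'' about $\Pi$ itself rather than about some proper extension of $\Pi$.
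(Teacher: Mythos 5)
Your proposal is correct and follows essentially the same route as the paper: both reduce the corollary to Theorem~1 by observing that a coloring with $U(\Pi)=\varnothing$ can only be extended to itself, so ``extendable to an equitable partition'' collapses to ``is already equitable.'' Your explicit remark that the ``$+$'' operator has nothing left to add is exactly the (implicit) content of the paper's phrase that $\Pi$ ``is extended to the equitable $k$-partition $\Pi$ itself.''
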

\begin{proof}
By Theorem 1, if (\ref{SUPERMRAW}) holds then $\Pi$ is extended to the equitable $k$-partition $\Pi$ itself. Since $\Pi$ is already a coloring,
$\Pi$ is a $k$-eqcol. The converse is analogous.
\end{proof}

If we wonder when a partial coloring can be extended to an equitable coloring, it is clearly that condition (\ref{SUPERMRAW}) is necessary.
However, if we know a lower bound of $\chi_{eq}$, the condition can be tightened:

\begin{tcor}
Let $\Pi$ be a partial $k$-coloring, $M = M(\Pi)$, $t = t(\Pi)$ and $LB$ be a lower bound of $\chi_{eq}(G)$.
If $\Pi$ can be extended to an equitable coloring, then
\begin{equation} \label{SUPERM}
  n \geq \bigl( M - 1 \bigr)\cdot\max\{k,LB\} + t
\end{equation}
\end{tcor}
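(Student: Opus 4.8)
The plan is to reduce the claim to a direct counting argument on whatever equitable coloring $\Pi$ extends to. First I would fix such an extension: by hypothesis there is an equitable coloring $\Pi' = (C'_1, C'_2, \ldots, C'_n)$ obtained from $\Pi$ by successive applications of the operator ``$+$''. Say $\Pi'$ is a $k'$-coloring; then, as noted in the preliminaries, $k \leq k'$ and $C_j \subseteq C'_j$ for every $j \in [k]$.

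Next I would pin down how many colors $\Pi'$ uses. Since $\Pi'$ is an equitable coloring of $G$, it is in particular a $k'$-eqcol, so $k' \geq \chi_{eq}(G) \geq LB$; together with $k' \geq k$ this yields $k' \geq \max\{k, LB\}$.

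The core step is the count. For each $j \in T(\Pi)$ we have $|C'_j| \geq |C_j| = M$, and there are $t = t(\Pi) \geq 1$ such indices, so $\Pi'$ has at least one class of size $\geq M$; by the equity constraint applied to $\Pi'$, every one of its $k'$ classes has size at least $M - 1$. Splitting the sum of the class sizes of $\Pi'$ according to whether the index lies in $T(\Pi)$ or not gives
\[ n \;=\; \sum_{j=1}^{k'} |C'_j| \;\geq\; t\,M + (k'-t)(M-1) \;=\; (M-1)\,k' + t. \]
Finally, since $\Pi$ is a partial $k$-partition the classes $C_1, \ldots, C_k$ are all non-empty, hence $M \geq 1$ and $M - 1 \geq 0$; combined with $k' \geq \max\{k, LB\}$ this gives $(M-1)\,k' \geq (M-1)\max\{k, LB\}$, and (\ref{SUPERM}) follows.

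There is essentially no hard step here; the only points that need care are (i) reading the equity constraint off $\Pi'$ and its $k'$ classes (possibly more than $k$) rather than off $\Pi$, and (ii) the harmless but necessary observation that $M \geq 1$, which is what licenses replacing $k'$ by $\max\{k, LB\}$ in the last inequality. One could alternatively apply Theorem 1 directly to $\Pi'$, but the self-contained count above is shorter.
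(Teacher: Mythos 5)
Your proof is correct and uses essentially the same counting argument as the paper: in the equitable extension $\Pi'$, the $t$ classes indexed by $T(\Pi)$ have size at least $M$ and, by the equity constraint, all remaining classes have size at least $M-1$. The only organizational difference is that you count all $k'$ classes at once and then use $M-1\geq 0$ to pass from $k'$ to $\max\{k,LB\}$, whereas the paper splits into the cases $k\geq LB$ (handled by Theorem 1) and $k<LB$ (counting only the first $LB$ classes); your version is marginally more self-contained but not a different method.
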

\begin{proof}
In the case that $k \geq LB$, (\ref{SUPERM}) holds by Theorem 1. Hence, we assume $k < LB$.
If $\Pi$ can be extended to an equitable $k'$-coloring $\Pi'$,
we have that $k' \geq \chi_{eq}(G) \geq LB$ and classes from $T(\Pi)$ in $\Pi'$ must have at least $M$ vertices.
Consequently, classes from $[LB] \backslash T(\Pi)$ in $\Pi'$ must have at least $M - 1$ vertices. Therefore,
$n \geq M\cdot t + (M-1)\cdot(LB-t)$ and (\ref{SUPERM}) holds.
\end{proof}

We include the condition given in the previous result as a pruning rule in the Brown's scheme. Below, we sketch our approach called \textsc{EqDSatur}:\\

\medskip

\noindent \underline{\textsc{Input}}: $G$ a graph, $\Pi_0$ an initial partial coloring of $G$, $\Pi^*$ an initial equitable coloring of $G$
and $LB$ a lower bound of $\chi_{eq}(G)$.\\
\\
\noindent \underline{\textsc{Output}}: $\Pi^*$ an optimal equitable coloring of $G$, $UB = \chi_{eq}(G)$.\\
\\
\noindent \underline{\textsc{Algorithm}}: Set $UB \leftarrow k(\Pi^*)$. Then, execute $\textsc{Node}(\Pi_0)$.\\
\\
\noindent \underline{\textsc{Node}$(\Pi)$}:\\
\indent \emph{Step 1}. If $U(\Pi) = \varnothing$, set $UB \leftarrow k(\Pi)$, $\Pi^* \leftarrow \Pi$ and return.\\
\indent \emph{Step 2}. Select a vertex $u \in U(\Pi)$.\\
\indent \emph{Step 3}. For each color $j \in [\min\{k(\Pi)+1, UB-1\}]$ such that $j \in F_{\Pi}(u)$:
\indent \indent \indent Set $\Pi' \leftarrow \Pi + \langle u, j \rangle$.\\
\indent \indent \indent If $n \geq \bigl( M(\Pi') - 1 \bigr) \cdot \max\{k(\Pi'),LB\} + t(\Pi')$ and \\
\indent \indent \indent \indent $F_{\Pi'}(v) \cap [UB-1] \neq \varnothing$ for all $v \in U(\Pi')$, execute \textsc{Node}$(\Pi')$.\\

\medskip

The following theorem shows that \textsc{EqDSatur} works:
\begin{tthm}
If $\Pi_0$ can be extended to a $\chi_{eq}(G)$-eqcol then \textsc{EqDSatur} gives the value of $\chi_{eq}(G)$ into the variable $UB$
and an optimal equitable coloring into $\Pi^*$ after its execution.
\end{tthm}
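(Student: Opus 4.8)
The plan is to establish correctness in two parts: (i) every node explored by \textsc{EqDSatur} corresponds to a partial coloring that can be extended to a $\chi_{eq}(G)$-eqcol, or more precisely that the pruning rule never discards a partial coloring extendable to a $\chi_{eq}(G)$-eqcol; and (ii) the algorithm terminates having examined at least one leaf that is an optimal equitable coloring, so that $UB$ is correctly set. The backbone is a standard induction on the search tree, very much like the classical correctness argument for Brown's scheme, with the twist that the new pruning condition must be shown to be \emph{safe}, \ie it only prunes branches that provably cannot reach a $\chi_{eq}(G)$-eqcol.

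First I would argue safety of the new pruning step. Suppose at some node we hold a partial coloring $\Pi$ extendable to a $\chi_{eq}(G)$-eqcol, and we form $\Pi' = \Pi + \langle u,j\rangle$ along the branch that leads toward such an optimal eqcol (this branch exists by the extension hypothesis, since the optimal eqcol assigns $u$ some color, and the algorithm's color loop ranges over $[\min\{k(\Pi)+1,UB-1\}]$, which contains that color because the optimal eqcol uses at most $UB-1$ colors once $UB$ has been lowered past it, or exactly $k(\Pi)+1$ new colors otherwise). Then $\Pi'$ is itself extendable to that $\chi_{eq}(G)$-eqcol, so by Corollary~3 (with $LB \leq \chi_{eq}(G)$) the inequality $n \geq (M(\Pi')-1)\cdot\max\{k(\Pi'),LB\}+t(\Pi')$ holds, and the classical availability condition $F_{\Pi'}(v)\cap[UB-1]\neq\varnothing$ also holds because the optimal eqcol witnesses an available color in $[UB-1]$ for each uncolored $v$. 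Hence the node $\Pi'$ is \emph{not} pruned and \textsc{Node}$(\Pi')$ is invoked. Feeding $\Pi_0$ into this, and noting $\Pi_0$ is extendable to a $\chi_{eq}(G)$-eqcol by hypothesis, an induction down the tree shows that the algorithm always keeps alive a branch heading to an optimal eqcol, until that eqcol is reached as a leaf in Step~1, at which point $UB$ is set to $\chi_{eq}(G)$.

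Next I would argue that $UB$ never drops below $\chi_{eq}(G)$: whenever Step~1 fires, $U(\Pi)=\varnothing$ so $\Pi$ is a genuine $k(\Pi)$-coloring, and since $\Pi$ survived the pruning test on the edge that created it, condition~(\ref{SUPERMRAW}) holds (the $\max\{k,LB\}$ form implies the $k$ form), so by Corollary~2 $\Pi$ is a $k(\Pi)$-eqcol, whence $k(\Pi)\geq\chi_{eq}(G)$. Therefore every leaf that updates $\Pi^*$ stores a valid equitable coloring with at least $\chi_{eq}(G)$ colors, and combined with the previous paragraph — which guarantees some leaf attains exactly $\chi_{eq}(G)$ — the final values are $UB=\chi_{eq}(G)$ and $\Pi^*$ an optimal equitable coloring. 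Termination is immediate since the recursion strictly extends a finite partial coloring at each level.

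The main obstacle I expect is the bookkeeping around Step~3's color range $[\min\{k(\Pi)+1,UB-1\}]$: one must check carefully that, as $UB$ is progressively tightened, the branch leading to the target optimal eqcol is never cut off by the $UB-1$ cap. The clean way to handle this is to fix one optimal eqcol $\Pi^{opt}$ at the outset, use only colors $1,\dots,\chi_{eq}(G)$, and observe that along the tracked branch $k(\Pi)\leq\chi_{eq}(G)$ always, and $UB\geq\chi_{eq}(G)$ is maintained by the argument above, so $UB-1\geq\chi_{eq}(G)-1$ while the next color needed is at most $\min\{k(\Pi)+1,\chi_{eq}(G)\}\leq UB-1$ unless the branch is already complete — a small case analysis that closes the gap. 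Everything else is routine given Corollaries~2 and~3.
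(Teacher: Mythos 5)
Your proposal is correct and follows essentially the same approach as the paper's proof: safety of the pruning rule via Corollary~3 (a node is pruned only if it cannot be extended to an equitable coloring) combined with Corollary~2 to certify that every coloring reached at Step~1 is equitable. The paper leaves the induction down the search tree and the bookkeeping of the color range $[\min\{k(\Pi)+1,UB-1\}]$ implicit, inheriting them from the correctness of Brown's scheme, whereas you spell them out; the substance is the same.
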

\begin{proof}
In the case that (\ref{SUPERM}) does not hold, the node corresponding to $\Pi'$ is not called since $\Pi'$ can not be extended to an equitable
coloring according to Corollary 3. Therefore, the algorithm does not prune nodes that could reach an optimal equitable coloring.

Also, each coloring reached at Step 1 is indeed an equitable coloring, due to Corollary 2 and the fact that the current coloring satisfies (\ref{SUPERM}).
\end{proof}


\section{Implementation of EqDSatur} \label{SEQIMPL}

It is clear that the scheme proposed previously is barely helpful if we do not know how to implement it in a efficient way.

Below, we propose a detailed fast implementation of \textsc{EqDSatur}. Indentations are meaningful and mark the scope of the operations involved.
All sets listed in the implementation are represented by global binary-valued arrays.
Global variable $k$ is the number of colors of the current partial partition.\\

\medskip

\noindent \underline{\textsc{Input}}: $G$ a graph, $\Pi^*$ an initial eqcol of $G$ and $LB$ a lower bound of $\chi_{eq}(G)$.\\
\\
\noindent \underline{\textsc{Output}}: $\Pi^*$ an optimal eqcol of $G$, $UB = \chi_{eq}(G)$.\\
\\
\noindent \underline{\textsc{Algorithm}}:\\
\indent Set $UB \leftarrow k(\Pi^*)$.\\
\indent Create a partial coloring $\Pi$ such that $C_i \leftarrow \{v_i\}$ for all $i \in [q]$, where\\
\indent \indent $Q = \{v_1, v_2, \ldots, v_q\}$ is a maximal clique of $G$.\\
\indent Set $U(\Pi) \leftarrow V \backslash Q$ and $k \leftarrow q$.\\
\indent Execute $\textsc{Node}(1, q)$.\\
\\
\noindent \underline{\textsc{Node}$(M,t)$}:\\
\indent \emph{Step 1}. If $U(\Pi) = \varnothing$, set $UB \leftarrow k$, $\Pi^* \leftarrow \Pi$ and return.\\
\indent \emph{Step 2}. Select a vertex $u \in U(\Pi)$.\\
\indent \emph{Step 3}. For each $j \in [\min\{k+1, UB-1\}]$ such that $j \in F_{\Pi}(u)$:\\
\indent \indent Set $size \leftarrow |C_j|$.\\
\indent \indent If $j \leq k$, do:\\
\indent \indent \indent If $size = M$, set $t' \leftarrow 1$ and $M' \leftarrow M + 1$.\\
\indent \indent \indent If $size = M-1$, set $t' \leftarrow t + 1$ and $M' \leftarrow M$.\\
\indent \indent \indent If $size \leq M-2$ set $t' \leftarrow t$ and $M' \leftarrow M$.\\
\indent \indent If $j = k+1$, do:\\
\indent \indent \indent If $M = 1$, set $t' \leftarrow t + 1$ and $M' \leftarrow M$.\\
\indent \indent \indent If $M \geq 2$, set $t' \leftarrow t$ and $M' \leftarrow M$.\\
\indent \indent Set $previous\_k \leftarrow k$.\\
\indent \indent Set $k \leftarrow \max\{j,k\}$.\\
\indent \indent If $n \geq \bigl( M' - 1 \bigr) \cdot \max\{k,LB\} + t'$, do:\\
\indent \indent \indent Set $C_j \leftarrow C_j \cup \{u\}$.\\
\indent \indent \indent Set $U(\Pi) \leftarrow U(\Pi) \backslash \{u\}$.\\
\indent \indent \indent Execute \textsc{Node}$(M', t')$.\\
\indent \indent \indent Set $U(\Pi) \leftarrow U(\Pi) \cup \{u\}$.\\
\indent \indent \indent Set $C_j \leftarrow C_j \backslash \{u\}$.\\
\indent \indent Set $k \leftarrow previous\_k$.\\

\medskip

We do not describe implemetation details of how to update $F_{\Pi}(v)$ for the sake of readability,
but it can be found in \cite{PASS}.
On the other hand, details of how to compute the clique $Q$ and the initial equitable coloring is discussed in Section \ref{SBOUNDS}.

It is not hard to see that variables $M$ and $t$ are indeed the cardinality of the largest class and the number of color classes with size $M$ in the
current partial coloring. The update of these variables as well as $U(\Pi)$, $C_j$ and $k$ is performed in constant time.

Updating $M$ and $t$, and checking (\ref{SUPERM}) is cheap but not free. So, it becames important to analyze if the usage of this pruning rule
pays off in terms of CPU time. This task is performed in Section \ref{SCOMPU} through empirical experimentation.


\section{Lower and upper bounds of $\chi_{eq}(G)$} \label{SBOUNDS}

In order to initialize \textsc{EqDSatur}, it is necessary to compute bounds of the equitable chromatic number. In this section, we discuss how to obtain
such values and we report some computational experiments related to them. We remark that, in particular, the lower bound $LB$ remains constant
during the enumeration, so it is essential that the value of $LB$ be as best as possible.

\subsection{Computation of lower bounds}

Clearly, every equitable coloring of $G$ is also a classic coloring of $G$ so every lower bound of $\chi(G)$ can be used as
a lower bound of $\chi_{eq}(G)$. In particular, the size of any maximal clique of $G$ is a known lower bound of $\chi(G)$ and
$\chi_{eq}(G)$. There are several ways suggested in the literature to obtain such cliques. The easiest method is, for a given
graph $G$ and a given vertex $v$, a greedy algorithm that includes $v$ as the first vertex of the clique and then selects
the vertex adjacent to the clique with highest degree in each step until no more vertices can be added to the clique.
Furthermore, one may apply this method to different initial vertices $v$ and choose the largest
clique. In the case that two cliques of the same size are found, it is advisable to follow a suggestion made by Sewell \cite{SEWELL}:
retain the clique $Q$ that maximizes $\sum_{q \in Q} d(q)$. The clique found with this criterion will lead to smaller initial sets $F(v)$ since
those colors used by the clique will not be available for vertices $v$ adjacent to some vertex in the clique. Let us call \textsc{FindClique}($G$) to this algorithm.\\

Let us notice that the distance between $\chi(G)$ and $\chi_{eq}(G)$ can be as far as we want. Such is the case with
star graphs $K_{1,m}$ \cite{MEYER} (\ie a graph $K_{1,m}$ is composed of a vertex $v$ and a stable set $S$ of size $m$ such that $v$ is adjacent to every
vertex in $S$):
\[ \chi_{eq}(K_{1,m}) - \chi(K_{1,m}) = (\lceil m/2 \rceil + 1) - 2 = \lceil m/2 \rceil - 1. \]
Therefore, it becomes essential to find other lower bounds for $\chi_{eq}(G)$ besides a maximal clique of $G$.
Lih and Chen \cite{EQTREE} proved that $$\chi_{eq}(G) \geq \biggl\lceil \dfrac{n + 1}{\alpha(V \backslash N[v]) + 2} \biggr\rceil$$ for any
$v \in V$. However, it requires to know the stability number of $G[V \backslash N[v]]$, an NP-Hard problem \cite{GAREY}.
Nevertheless, a relaxation of this value can be used instead. It is known that the cardinality of a partition in cliques of a graph is an upper bound
for the stability number of that graph. Let $PC_v$ be the cardinality of a partition in cliques of $G[V \backslash N[v]]$. The lower the size
of the partition is, the tighter the bound becomes.
Let us call \textsc{EqLowBound}($G$) to the algorithm that computes the number
\begin{equation*}
  \max \biggl\{ \biggl\lceil \dfrac{n + 1}{PC_v + 2} \biggr\rceil : v \in V \biggr\},
\end{equation*}
where $PC_v$ is obtained by the following greedy heuristic. Initially, let $G_v$ be the graph $G[V \backslash N[v]]$. We compute a maximal clique of $G_v$
and then we delete those vertices from $G_v$ that belong to the clique found. This simple procedure is repeated until $G_v$ becomes empty, and $PC_v$ is
the number of cliques found.

We want to emphasize that both procedures (\textsc{FindClique} and \textsc{EqLowBound}) could be improved,
thus obtaining better bounds of $\chi_{eq}$ but at the expense of spending more CPU time.

\subsection{Computation of upper bounds}

A known upper bound for $\chi_{eq}(G)$ is $\Delta(G) + 1$ \cite{HAJNAL}, but a slightly better one can be derived from a result stated in \cite{ORE}: ``every graph
satisfying $d(u) + d(v) \leq 2r + 1$ for every edge $(u,v)$, has a $(r+1)$-eqcol''. From this result, it is straightforward to obtain the
following relationship:
\begin{equation} \label{OREBOUND}
  \chi_{eq}(G) \leq \biggl\lceil \dfrac{\max \{ d(u) + d(v) : (u,v) \in E \} - 1}{2} \biggr\rceil + 1.
\end{equation}

Another way for finding an initial upper bound is via heuristics. In our implementation, we adopt \textsc{Naive} \cite{KUBALE} which is a heuristic
that works well and produces good solutions.
Basically, \textsc{Naive} generates a classic coloring with the algorithm $SL$ \cite{MATULA} and then re-color vertices from the biggest color
class to the smallest color class. When it is not possible, a new color is assigned to some vertex from the biggest class. The re-coloring
procedure is repeated until an equitable coloring is reached.

\subsection{Quality of the bounds}

As we said above, it is important to bear in mind the CPU time assigned to the procedures that yield the bounds and how much they
will impact in the enumerative algorithm. Since these procedures are fast heuristics, we are not sure whether they yield quality bounds.
Next, we analize them through experimentation.

This experiment and all the further ones shown in this paper were carried out on an Intel i5 CPU 750@2.67GHz with Ubuntu Linux O.S. and Intel C++ Compiler.

We denote by $LB_{FC}$ to the size of the maximal clique returned by \textsc{FindClique},
$LB_{ELB}$ to the lower bound computed by \textsc{EqLowBound},
$UB_{(\ref{OREBOUND})}$ to the upper bound given by (\ref{OREBOUND}) and
$UB_{NV}$ to the number of colors of the equitable coloring returned by \textsc{Naive}.

Random instances are generated from two parameters: the number of vertices $n$ and the probability $p$ that an edge is included in the graph.
Let us note that $p$ is approximately equal to the density of the random graph, \ie $$\dfrac{2|E|}{n(n - 1)}.$$ 

Table \ref{BOUNDTABLE} summarizes the average of the bounds over 450 ramdomly generated instances of different sizes (each row of the
table corresponds to 30 instances). Columns 1-2 show the number of vertices $n$ and probability $p$ of the evaluated instances.
Columns 3-6 display the average of $LB_{FC}$, $LB_{ELB}$, $UB_{(\ref{OREBOUND})}$ and $UB_{NV}$,
and Column 7 is the average of percentage of relative gap, \ie
$$\dfrac{100 (\min\{UB_{NV},UB_{(\ref{OREBOUND})}\} - \max\{LB_{FC}, LB_{ELB}\})}{\min\{UB_{NV},UB_{(\ref{OREBOUND})}\}}.$$

\begin{table}[h]
\begin{center}
\begin{tabular}{cc|cc|cc|c}
 & & \multicolumn{2}{|c}{Lower bound} & \multicolumn{2}{|c|}{Upper bound}\\
 $n$ & $p$ & $LB_{FC}$ & $LB_{ELB}$ & $UB_{(\ref{OREBOUND})}$ & $UB_{NV}$ & \% Rel. Gap \\
\hline
125 & 0.1 & 4.03 & 3 & 21.13 & 7.67 & 46.5 \\
125 & 0.3 & 6.2 & 5.13 & 49.9 & 20.13 & 68.37 \\
125 & 0.5 & 9.1 & 9 & 75.8 & 33.03 & 71.3 \\
125 & 0.7 & 14.13 & 17.2 & 99.3 & 46.67 & 62.53 \\
125 & 0.9 & 31 & 40.67 & 119.7 & 68.33 & 40.1 \\
\hline
250 & 0.1 & 4.23 & 3 & 38.43 & 12.27 & 64.87 \\
250 & 0.3 & 6.97 & 6 & 93.67 & 38.17 & 81.13 \\
250 & 0.5 & 10.33 & 11.03 & 146.33 & 65.77 & 82.53 \\
250 & 0.7 & 16.33 & 22.63 & 193.47 & 92.87 & 75.23 \\
250 & 0.9 & 38.33 & 63 & 236.7 & 137.17 & 53.17 \\
\hline
500 & 0.1 & 4.9 & 4 & 69.3 & 22.5 & 77.9 \\
500 & 0.3 & 7.73 & 7 & 180.2 & 72.23 & 88.83 \\
500 & 0.5 & 11.46 & 13 & 281.9 & 129.57 & 89.5 \\
500 & 0.7 & 18.6 & 28.43 & 378.67 & 184.8 & 84.1 \\
500 & 0.9 & 46.57 & 93.63 & 467.57 & 286.1 & 66.73
\end{tabular}
\end{center}
\caption{Comparison of bounds}
\label{BOUNDTABLE}
\end{table}

As we can see from Table \ref{BOUNDTABLE}, $LB_{ELB}$ is particularly useful for medium and high density graphs. The time spent in the computation of the bounds (less than a second) can be considered negligible compared to the duration of the enumerative algorithm. Therefore, it is reasonable to have on hand both lower bounds and choose the best one for each case.

Regarding $UB_{(\ref{OREBOUND})}$, it seems to be useless compared to $UB_{NV}$. Moreover, we did not find any instance such that
$UB_{NV} \geq UB_{(\ref{OREBOUND})}$ showing that \textsc{Naive} algorithm is enough to provide good upper bounds.

It is worth mentioning that medium density graphs present the worst average of relative gap. Unfortunately, this issue is transported
to the enumerative algorithm making these instances the hardest to solve.\\

We also evaluated the heuristics on a set of 64 benchmark instances, of which 60 are from a subset of DIMACS COLORLIB library \cite{DIMACS} and
the remaining 4 are Kneser graphs \cite{KNESER}. Both COLORLIB and Kneser graphs were already used by other authors for evaluating
equitable coloring algorithms (c.f. \cite{BYCBRA}).

Results are given in Tables \ref{INSTANCIAS1} and \ref{INSTANCIAS2}. Columns 1-4 show the name of the instance, its number of vertices and
edges, and its equitable chromatic number (a question mark ``?'' means $\chi_{eq}(G)$ is unknown so far). Columns 5-9 display the value
of the lower bounds, the upper bounds and the percentage of relative gap. Values marked in boldface mean they match with $\chi_{eq}(G)$.

Similarly to the previous experiment, heuristics took less than one second for almost all instances. The worst case was
\texttt{latin\_sq\_10} which took 4 seconds.

Let us note that optimality is reached in 6 instances, namely \texttt{anna}, \texttt{games120}, \texttt{homer}, \texttt{huck}, \texttt{jean}
and \texttt{le450\_25b}. \textsc{Naive} also is able to compute the optimal solution in 10 instances (\texttt{mug}*\texttt{\_}*,
*\texttt{-Insertions\_}*, \texttt{myciel4} and \texttt{kneser7\_3}). On the other hand, \textsc{FindClique} reachs the best lower bound
in 12 instances (\texttt{zeroin.i.1}, \texttt{queen7\_7}, \texttt{queen8\_12}, \texttt{mulsol.i.1}, \texttt{school1\_nsh}, \texttt{fpsol2.i.1},
\texttt{le}*\texttt{\_}* and \texttt{inithx.i.1}) while $\textsc{EqLowBound}$ reachs it only for \texttt{david}.

We conclude that heuristics presented in this section are reasonably fast, simple to implement, and suitable to provide good quality bounds
to an exact algorithm.


\section{Computational experiments} \label{SCOMPU}

In this section, we make computational experiments in order to find the best strategies for \textsc{EqDSatur} and compare it
against other exact algorithms. We work with random graphs with $n \in \{70,80\}$ and
$p \in \{0.1, 0.3, 0.5, 0.7, 0.9\}$, and with $n = 90$ and $p \in \{0.1, 0.3, 0.9\}$.
For each combination of $n$ and $p$, we generate $T = 30$ instances
and we analize the performance of our algorithm by considering the following indicators:
\begin{itemize}
\item \emph{Percentage of solved instances} (\% solved): An instance is considered ``solved'' when the time
needed to reach the optimal value is at most 2 hours. The percentage of solved instances is the
value $100.|S|/T$ where $S$ is the set of solved instances.
\item \emph{Average of the best upper bound reached} (Av. UB): It is the average of the upper bound obtained after
the enumeration, over all $T$ instances.
\item \emph{Average of nodes evaluated} (Av. Nodes): It is the average of nodes evaluated of the search tree
over the set of solved instances $S$. 
\item \emph{Average of time elapsed} (Av. Time): It is the average of time in seconds needed to solve each instance,
over the set of solved instances $S$.
\end{itemize}
We report them on tables, where each row corresponds to a different combination of $n$ and $p$, and each
column displays the value of an indicator for the strategy to be compared. In general, best values are marked in boldface.
We do not evaluate combinations $n = 90$ with $p \in \{0.5, 0.7\}$ since DSatur-based algorithms (including ours)
solves few instances in those cases and comparisons become rough. The total number of instances amounts to 390.\\

When we compare two strategies $A$ and $B$, it may happen that the instances solved by $A$ and $B$ are different and the
comparison of the averages of nodes and time may be ambiguous or unfair. In those cases, we consider these averages over
the set of instances solved by both strategies: if $S_A$ and $S_B$ are the set of solved instances for $A$ and $B$ respectively,
we also compute the average of nodes and time over the set $S_A \cap S_B$. These values are reported with a mark ``$\dagger$''.

\subsection{Vertex selection strategy}

The following experiment compares an implementation of \textsc{EqDSatur} with the three vertex selection strategies mentioned in Section \ref{SDSATUR}
namely DSATUR-VSS, CELIM-VSS and PASS-VSS.
Tables \ref{TABLEVSS1}-\ref{TABLEVSS2} resume the results.

\begin{table}[!h]
\begin{center} \footnotesize
\begin{tabular}{c|c|c@{\hspace{4pt}}c@{\hspace{4pt}}c|c@{\hspace{4pt}}c@{\hspace{4pt}}c}
    &   & \multicolumn{3}{|c}{\% solved} & \multicolumn{3}{|c}{Av. UB} \\
$n$ & $p$ & DSATUR & CELIM & PASS & DSATUR & CELIM & PASS \\
\hline
70 & 0.1 & 100 & 100 & 100 & 4 & 4 & 4 \\
70 & 0.3 & 100 & 100 & 100 & 7.93 & 7.93 & 7.93 \\
70 & 0.5 & 93 & 97 & \textbf{100} & 12.03 & 11.93 & \textbf{11.83} \\
70 & 0.7 & 97 & \textbf{100} & \textbf{100} & 17.53 & \textbf{17.3} & \textbf{17.3} \\
70 & 0.9 & 100 & 100 & 100 & 29.2 & 29.2 & 29.2 \\
\hline
80 & 0.1 & 100 & 100 & 100 & 4.23 & 4.23 & 4.23 \\
80 & 0.3 & \textbf{100} & 97 & \textbf{100} & \textbf{8.43} & 8.53 & \textbf{8.43} \\
80 & 0.5 & 87 & 87 & \textbf{93} & 13.47 & 13.47 & \textbf{13.2} \\
80 & 0.7 & 53 & 50 & \textbf{70} & 20.1 & 20.2 & \textbf{19.53} \\
80 & 0.9 & 100 & 100 & 100 & 31.7 & 31.7 & 31.7 \\
\hline
90 & 0.1 & 100 & 100 & 100 & 5 & 5 & 5 \\
90 & 0.3 & 100 & 100 & 100 & 9 & 9 & 9 \\
90 & 0.9 & 100 & 100 & 100 & 34.2 & 34.2 & 34.2
\end{tabular}
\end{center}
\caption{Tests on different vertex selection strategies}
\label{TABLEVSS1}
\end{table}

\begin{table}[!h]
\begin{center} \footnotesize
\begin{tabular}{c|c|c@{\hspace{4pt}}c@{\hspace{4pt}}c|c@{\hspace{4pt}}c@{\hspace{4pt}}c}
    &   & \multicolumn{3}{|c}{Av. Nodes} & \multicolumn{3}{|c}{Av. Time} \\
$n$ & $p$ & DSATUR & CELIM & PASS & DSATUR & CELIM & PASS \\
\hline
70 & 0.1 & 216 & \textbf{168} & 208 & 0 & 0 & 0 \\
70 & 0.3 & 401862 & 253181 & \textbf{171448} & 0.1 & 0.1 & 0.07 \\
70 & 0.5 & 6116237 & 5134138 & \textbf{4702843} & 6.61 & 7.76 & 6.7 \\
70 & 0.7 & 21048794 & \textbf{11175213} & 12020710 & 28.2 & \textbf{23.5} & \textbf{21} \\
70 & 0.9 & 249682 & 145481 & \textbf{138057} & 0.17 & 0.17 & 0.1 \\
\hline
80 & 0.1 & 1132 & \textbf{967} & 5186 & 0 & 0 & 0 \\
80 & 0.3 & 31102992 & 17153530 & \textbf{15495305} & 27.7 & 25.2 & \textbf{22.7} \\
80 & 0.5 & 540416906 & 333631281 & \textbf{192172556} & 601 & 574 & \textbf{324} \\
80 & 0.7 & 821110267 & 480890653 & 959670395 & 1263 & 1162 & 1817 \\
 &  & 675165908$^\dagger$ & \textbf{308409086}$^\dagger$ & 410011950$^\dagger$ & 1035$^\dagger$ & \textbf{749}$^\dagger$ & 791$^\dagger$ \\
80 & 0.9 & 5513947 & \textbf{3098276} & 3817790 & 8.2 & 7.6 & 6.57 \\
\hline
90 & 0.1 & 4521 & 3186 & \textbf{2875} & 0 & 0 & 0 \\
90 & 0.3 & 83857234 & 58179096 & \textbf{32510740} & 86.8 & 88.6 & \textbf{52} \\
90 & 0.9 & 144093673 & \textbf{71388770} & 73185398 & 305 & 218 & \textbf{161}
\end{tabular}
\end{center}
\caption{Tests on different vertex selection strategies}
\label{TABLEVSS2}
\end{table}

As we can see, PASS-VSS has been able to solve more instances than the other strategies.
Also, PASS-VSS performs better in terms of time. Nevertheless, DSATUR-VSS and CELIM-VSS reports less time than PASS-VSS for graphs of 80
vertices and $p = 0.7$. Since PASS-VSS has solved more instances than the other two strategies, we have added an
extra row marked with ``$\dagger$'' reporting averages for the three strategies over the instances that the three strategies have been able to
solve simultaneously. Here, CELIM-VSS seems to be a little better than PASS-VSS.
In our opinion, it is not worth considering these small improvements at the expense of solving fewer instances.

Our conclusion is that PASS-VSS is the right choice for our algorithm.

\subsection{Color selection strategy}

We contemplate four options:
\begin{itemize}
\item \emph{DSATUR-CSS}. Consider the set of available colors in ascending order.
\item \emph{BCCOL-CSS} \cite{BCCOLBRANCHINGRULE}. First consider the new color ($k+1$) and then the set of
available colors in ascending order.
\item \emph{ORDER1-CSS}. Sort color classes of $\Pi$ according to their size in ascending order: $|C_{i_1}| \leq |C_{i_2}| \leq \ldots \leq |C_{i_k}|$.
Then consider colors in the following order: $i_1$, $i_2$, $\ldots$, $i_k$, $k+1$.
\item \emph{ORDER2-CSS}. Do the same as in ORDER1-CSS but considering colors in the following order: $k+1$, $i_1$, $i_2$, $\ldots$, $i_k$.
\end{itemize}

BCCOL-CSS is implemented as part of the branching strategy in the Branch-and-Cut \textsc{BC-Col} and the idea is that it tends to
find feasible colorings quickly, albeit not good since it introduces new colors to reach them.
ORDER1-CSS is inspired in the heuristic presented in \cite{BRELAZEQUIT}.
This rule tends to balance the sizes of color classes and finds equitable colorings early.
The downside is that a QuickSort must be performed on each node.
ORDER2-CSS is a mix between ORDER1-CSS and BCCOL-CSS. Since we have noticed that it does not perform as well as the others, we do not
report it.

Results for DSATUR-CSS, BCCOL-CSS and ORDER1-CSS are resumed in Tables \ref{TABLECSS1}-\ref{TABLECSS2}.

\begin{table}[!h]
\begin{center} \footnotesize
\begin{tabular}{c|c|c@{\hspace{4pt}}c@{\hspace{4pt}}c|c@{\hspace{4pt}}c@{\hspace{4pt}}c}
    &   & \multicolumn{3}{|c}{\% solved} & \multicolumn{3}{|c}{Av. UB} \\
$n$ & $p$ & DSATUR & BCCOL & ORDER1 & DSATUR & BCCOL & ORDER1 \\
\hline
70 & 0.1 & 100 & 100 & 100 & 4 & 4 & 4 \\
70 & 0.3 & 100 & 100 & 100 & 7.93 & 7.93 & 7.93 \\
70 & 0.5 & 100 & 100 & 100 & 11.8 & 11.8 & 11.8 \\
70 & 0.7 & \textbf{100} & 93 & \textbf{100} & \textbf{17.3} & 17.8 & \textbf{17.3} \\
70 & 0.9 & \textbf{100} & 77 & \textbf{100} & \textbf{29.2} & 30.8 & \textbf{29.2} \\
\hline
80 & 0.1 & 100 & 100 & 100 & 4.23 & 4.23 & 4.23 \\
80 & 0.3 & \textbf{100} & 93 & \textbf{100} & \textbf{8.43} & 8.63 & \textbf{8.43} \\
80 & 0.5 & 93 & 93 & 93 & \textbf{13.2} & 13.3 & \textbf{13.2} \\
80 & 0.7 & 70 & \textbf{73} & 70 & \textbf{19.5} & 20.3 & \textbf{19.5} \\
80 & 0.9 & \textbf{100} & 90 & \textbf{100} & \textbf{31.7} & 32.5 & \textbf{31.7} \\
\hline
90 & 0.1 & 100 & 100 & 100 & 5 & 5 & 5 \\
90 & 0.3 & 100 & 100 & 100 & 9 & 9 & 9 \\
90 & 0.9 & \textbf{100} & 80 & \textbf{100} & \textbf{34.2} & 36.5 & \textbf{34.2}
\end{tabular}
\end{center}
\caption{Tests on different color section strategies}
\label{TABLECSS1}
\end{table}

\begin{table}[!h]
\begin{center} \footnotesize
\begin{tabular}{c|c|c@{\hspace{4pt}}c@{\hspace{4pt}}c|c@{\hspace{4pt}}c@{\hspace{4pt}}c}
    &   & \multicolumn{3}{|c}{Av. Nodes} & \multicolumn{3}{|c}{Av. Time} \\
$n$ & $p$ & DSATUR & BCCOL & ORDER1 & DSATUR & BCCOL & ORDER1 \\
\hline
70 & 0.1 & \textbf{208} & \textbf{208} & 311 & 0 & 0 & 0 \\
70 & 0.3 & 171448 & \textbf{139351} & 166733 & 0.07 & 0.03 & 0.07 \\
70 & 0.5 & \textbf{4702843} & 34141331 & 10586393 & \textbf{6.7} & 35.9 & 14.3 \\
70 & 0.7 & \textbf{12020710} & 116668568 & \textbf{12120596} & \textbf{21} & 109 & 25.1 \\
70 & 0.9 & \textbf{138057} & 11987145 & \textbf{138058} & \textbf{0.1} & 10.3 & \textbf{0.13} \\
\hline
80 & 0.1 & 5186 & \textbf{932} & 1006 & 0 & 0 & 0 \\
80 & 0.3 & \textbf{15495305} & 17812892 & \textbf{15394052} & 22.7 & 21.9 & 23 \\
80 & 0.5 & 192172555 & 270041809 & \textbf{179275549} & 324 & 379 & \textbf{318} \\
80 & 0.7 & 959670395 & 810826785 & 941362879 & 1817 & 1179 & 1807 \\
 & & 1052136994$^\dagger$ & \textbf{923999573}$^\dagger$ & 1028274591$^\dagger$ & 1978$^\dagger$ & \textbf{1259}$^\dagger$ & 1963$^\dagger$ \\
80 & 0.9 & \textbf{3817790} & 42009653 & \textbf{3818024} & \textbf{6.57} & 37.7 & \textbf{6.8} \\
\hline
90 & 0.1 & \textbf{2875} & 5907 & 2909 & 0 & 0 & 0 \\
90 & 0.3 & \textbf{32510740} & 47623951 & 44847604 & \textbf{52} & 74.1 & 65.6 \\
90 & 0.9 & \textbf{73185398} & 164689901 & \textbf{73184947} & \textbf{161} & 253 & 168
\end{tabular}
\end{center}
\caption{Tests on different color section strategies}
\label{TABLECSS2}
\end{table}

We first analyze the differences between the classical strategy DSATUR-CSS and BCCOL-CSS, where the latter performs quite well for $n = 80$
and $p = 0.7$. We have noticed that both strategies do not solve the same instances, hence the discrepancy between solved
instances (70\% and 73\% respectively) and average of $UB$ (19.5 and 20.3 respectively), so we have added an extra row reporting
averages over the instances that both strategies have been able to solve simultaneously. Although, by inspecting the extra row, BCCOL-CSS
solves the ``common'' instances 57\% faster than DSATUR-CSS, the performance of BCCOL-CSS is worse for most of the remaining rows.

Regarding ORDER1-CSS, we can note that there are few differences between this strategy and DSATUR-CSS. Both strategies solves the same
instances and reaches the same $UB$ for every non-solved graph. The time used by DSATUR-CSS is slightly less than ORDER1-CSS for graphs of 70 and 90
vertices. For $n = 80$ and $p \in \{ 30, 50\}$, ORDER1-CSS evaluates 7\% and 2\% less nodes respectively than DSATUR-CSS. Since ORDER1-CSS
performs a QuickSort at each node, the differences in time among these strategies fall to 2\% and 0.6\% respectively.

We choose DSATUR-CSS for our implementation of \textsc{EqDSatur}, but ORDER1-CSS may be considered as an alternative strategy anywise.

\subsection{TrivialEqDSatur vs. EqDSatur}

Our next experiment consists of comparing \textsc{TrivialEqDSatur} and \textsc{EqDSatur} implementations
in order to verify whether the pruning rule given in Section \ref{SNEWPRUN} is efficient. We recall that \textsc{TrivialEqDSatur} is a simple modification
of the standard \textsc{DSatur} that checks whether the colorings at the leafs of the search tree are equitable or not. Both
algorithms use the same selection strategies previously chosen and the same bounds given by the heuristics proposed in Section
\ref{SBOUNDS} (although \textsc{TrivialEqDSatur} does not take advantage of the value of $LB$).
Table \ref{TRIVIALVSNOTRIVIAL} resumes the results.

\begin{table}[!h]
\begin{center} \footnotesize
\begin{tabular}{c|c|c@{\hspace{4pt}}c|c@{\hspace{4pt}}c|c@{\hspace{4pt}}c|c@{\hspace{4pt}}c}
  &  & \multicolumn{2}{|c}{\% solved} & \multicolumn{2}{|c}{Av. UB} & \multicolumn{2}{|c}{Av. Nodes} & \multicolumn{2}{|c}{Av. Time} \\
$n$ & $p$ & Triv. & EqDS & Triv. & EqDS & Triv. & EqDS & Triv. & EqDS \\
\hline
70 & 0.1 & 100 & 100 & 4 & 4 & 264721 & \textbf{208} & 0 & 0 \\
70 & 0.3 & 100 & 100 & 7.93 & 7.93 & 168862113 & \textbf{171448} & 32.7 & \textbf{0.07} \\
70 & 0.5 & 100 & 100 & 11.8 & 11.8 & 88287477 & \textbf{4702843} & 29.9 & \textbf{6.7} \\
70 & 0.7 & 100 & 100 & 17.3 & 17.3 & 37918448 & \textbf{12020710} & 29 & \textbf{21} \\
70 & 0.9 & 100 & 100 & 29.2 & 29.2 & 2776802 & \textbf{138057} & 1.07 & 0.1 \\
\hline
80 & 0.1 & 100 & 100 & 4.23 & 4.23 & 130316183 & \textbf{5186} & 20.6 & \textbf{0} \\
80 & 0.3 & 100 & 100 & 8.43 & 8.43 & 345842251 & \textbf{15495305} & 99.6 & \textbf{22.7} \\
80 & 0.5 & 83 & \textbf{93} & 13.6 & \textbf{13.2} & 1614284274 & \textbf{192172556} & 705 & \textbf{324} \\
80 & 0.7 & 67 & \textbf{70} & 19.6 & \textbf{19.5} & 897961603 & 959670395 & 1665 & 1817 \\
   &    &     &             &      &               & 897961603$^\dagger$ & \textbf{828204878}$^\dagger$ & 1665$^\dagger$ & \textbf{1585}$^\dagger$\\
80 & 0.9 & 100 & 100 & 31.7 & 31.7 & 122216644 & \textbf{3817790} & 54 & \textbf{6.57} \\
\hline
90 & 0.1 & 100 & 100 & 5 & 5 & 15428656 & \textbf{2875} & 2.47 & \textbf{0} \\
90 & 0.3 & 100 & 100 & 9 & 9 & 124572212 & \textbf{32510740} & 75.9 & \textbf{52} \\
90 & 0.9 & 100 & 100 & 34.2 & 34.2 & 75124470 & \textbf{73185398} & 169 & \textbf{161}
\end{tabular}
\end{center}
\caption{Comparison between \textsc{TrivialEqDSatur} and \textsc{EqDSatur}}
\label{TRIVIALVSNOTRIVIAL}
\end{table}

We have noticed that every instance solved by \textsc{TrivialEqDSatur} has been solved by \textsc{EqDSatur}
too, but not conversely. This fact led us to insert an extra row in the table for the case $n = 80$ and $p = 0.7$, where we
report the average of nodes evaluated and time elapsed of \textsc{EqDSatur} for those instances that have been solved by \textsc{TrivialEqDSatur}.

We can observe that \textsc{EqDSatur} outperforms \textsc{TrivialEqDSatur} for all the indicators.

\subsection{Comparing against other exact algorithms}

This subsection is devoted to compare \textsc{EqDSatur} against the Branch-and-Cut B\&C-$LF_2$ described in \cite{BYCBRA} and the general purpose
solver CPLEX 12.4 with the IP formulation given in \cite{PAPERDAM} and the initial bounds computed by the heuristics given in Section
\ref{SBOUNDS}.

In the first experiment, we consider 30 instances for each combination of $n \in \{60, 70\}$ and $p \in \{0.1, 0.3, 0.5, 0.7, 0.9\}$.
We also consider $n \in \{ 80, 100, 120 \}$ with $p = 0.1$ and $n = 80$ with $p = 0.9$ since CPLEX solves very few medium-density
random instances with $n \geq 80$. The total number of instances amounts to 420.

Table \ref{RANDOMFINAL} summarizes the results, where $LB$ and $UB$ are averaged over all instances while the time elapsed is averaged
over solved instances. A mark ``$-$'' is reported when no instance is solved. Columns called ``Init.'' correspond to the bounds computed by the initial heuristics.

\begin{table}[!h]
\begin{center} \small
\begin{tabular}{c|c|c@{\hspace{4pt}}c|c@{\hspace{4pt}}c|c@{\hspace{4pt}}c@{\hspace{4pt}}c|c@{\hspace{4pt}}c}
  &  & \multicolumn{2}{|c}{\% solved} & \multicolumn{2}{|c}{Av. LB} & \multicolumn{3}{|c}{Av. UB} & \multicolumn{2}{|c}{Av. Time} \\
$n$ & $p$ & CPX & EqDS & Init. & CPX & Init. & CPX & EqDS & CPX & EqDS \\
\hline
60 & 0.1 & 100 & 100 & 3.23 & 4 & 4.9 & 4 & 4 & 0 & 0 \\
60 & 0.3 & 100 & 100 & 5.23 & 7.03 & 10.7 & 7.03 & 7.03 & 506 & 0 \\
60 & 0.5 & 63 & 100 & 7.63 & 10.6 & 17.1 & 11 & 10.8 & 1825 & 0.6 \\
60 & 0.7 & 90 & 100 & 12.7 & 15.7 & 22.7 & 15.8 & 15.7 & 942 & 1.6 \\
60 & 0.9 & 100 & 100 & 22.8 & 26 & 32.3 & 26 & 26 & 1 & 0 \\
\hline
70 & 0.1 & 100 & 100 & 3.5 & 4 & 5.03 & 4 & 4 & 0.03 & 0 \\
70 & 0.3 & 50 & 100 & 5.33 & 7.5 & 12.6 & 8 & 7.93 & 4005 & 0.07 \\
70 & 0.5 & 0 & 100 & 8.13 & 11 & 19.1 & 12.8 & 11.8 & $-$ & 6.7 \\
70 & 0.7 & 20 & 100 & 13.9 & 16.7 & 26.6 & 18.2 & 17.3 & 2360 & 21 \\
70 & 0.9 & 100 & 100 & 25.1 & 29.2 & 37.7 & 29.2 & 29.2 & 258 & 0.1  \\
\hline
80 & 0.1 & 100 & 100 & 3.67 & 4.23 & 5.63 & 4.23 & 4.23 & 1.33 & 0 \\
80 & 0.9 & 90 & 100 & 27.1 & 31.6 & 42.1 & 31.7 & 31.7 & 659 & 6.57 \\
100 & 0.1 & 100 & 100 & 3.9 & 5 & 6.87 & 5 & 5 & 15.5 & 0 \\
120 & 0.1 & 50 & 100 & 4 & 5 & 7.73 & 5.5 & 5.1 & 1673 & 2.57
\end{tabular}
\end{center}
\caption{Performance of \textsc{EqDSatur} and CPLEX on random graphs}
\label{RANDOMFINAL}
\end{table}

We note that our algorithm is able to solve more instances than CPLEX in considerably less time. The differences are more pronounced in
medium density instances.

We do not compare \textsc{EqDSatur} directly against B\&C-$LF_2$ since values reported in \cite{BYCBRA} consider
different random instances. Despite this, we remark that B\&C-$LF_2$ has failed to solve any instance
with $n = 70$ and $p \in \{0.3, 0.5\}$ whereas \textsc{EqDSatur} can solve instances of the same
size without difficulty.\\

The last experiment consists of comparing \textsc{EqDSatur} against CPLEX and B\&C-$LF_2$ on DIMACS COLORLIB instances and
Kneser graphs proposed in Section \ref{SBOUNDS}, except those instances that have been already solved by the initial heuristics.
Besides DSATUR-CSS, we also take into account the alternative color strategy ORDER1-CSS.
\begin{table}[ht]
\begin{center}
\begin{tabular}{cccc|cc|cc|c}
 & & & & \multicolumn{2}{|c}{Lower Bound} & \multicolumn{2}{|c|}{Upper Bound} & \\
 {\footnotesize Name} & {\footnotesize Vert.} & {\footnotesize Edges} & {\footnotesize $\chi_{eq}$} & {\footnotesize $LB_{FC}$} &
 {\footnotesize $LB_{ELB}$} & {\footnotesize $UB_{(\ref{OREBOUND})}$} & {\footnotesize $UB_{NV}$} & {\footnotesize \% Rel. Gap.} \\
\hline
miles750 & 128 & 2113 & 31 & 30 & 11 & 64 & 33 & 9.09 \\
miles1000 & 128 & 3216 & 42 & 40 & 17 & 87 & 47 & 14.89 \\
miles1500 & 128 & 5198 & 73 & 69 & 43 & 107 & 74 & 6.76 \\
zeroin.i.1 & 211 & 4100 & 49 & \textbf{49} & 3 & 111 & 51 & 3.92 \\
zeroin.i.2 & 211 & 3541 & 36 & 30 & 4 & 141 & 51 & 41.18 \\
zeroin.i.3 & 206 & 3540 & 36 & 30 & 4 & 141 & 49 & 38.78 \\
queen6\_6 & 36 & 290 & 7 & 6 & 5 & 20 & 10 & 40 \\
queen7\_7 & 49 & 476 & 7 & \textbf{7} & 6 & 24 & 12 & 41.67 \\
queen8\_8 & 64 & 728 & 9 & 8 & 8 & 28 & 18 & 55.56 \\
queen8\_12 & 96 & 1368 & 12 & \textbf{12} & 11 & 33 & 20 & 40 \\
queen9\_9 & 81 & 1056 & 10 & 9 & 8 & 32 & 15 & 40 \\
queen10\_10 & 100 & 1470 & ? & 10 & 10 & 36 & 18 & 44.44 \\
anna & 138 & 493 & 11 & \textbf{11} & 3 & 61 & \textbf{11} & 0 \\
david & 87 & 406 & 30 & 11 & \textbf{30} & 59 & 40 & 25 \\
games120 & 120 & 638 & 9 & \textbf{9} & 5 & 14 & \textbf{9} & 0 \\
homer & 561 & 1628 & 13 & \textbf{13} & 2 & 89 & \textbf{13} & 0 \\
huck & 74 & 301 & 11 & \textbf{11} & 6 & 40 & \textbf{11} & 0 \\
jean & 80 & 254 & 10 & \textbf{10} & 3 & 30 & \textbf{10} & 0 \\
1-FullIns\_3 & 30 & 100 & 4 & 3 & 3 & 12 & 7 & 57.14 \\
2-FullIns\_3 & 52 & 201 & 5 & 4 & 3 & 16 & 9 & 55.56 \\
3-FullIns\_3 & 80 & 346 & 6 & 5 & 3 & 20 & 7 & 28.57 \\
4-FullIns\_3 & 114 & 541 & 7 & 6 & 3 & 24 & 12 & 50 \\
5-FullIns\_3 & 154 & 792 & 8 & 7 & 3 & 28 & 9 & 22.22 \\
1-FullIns\_4 & 93 & 593 & 5 & 3 & 3 & 33 & 7 & 57.14 \\
mug88\_1 & 88 & 146 & 4 & 3 & 3 & 5 & \textbf{4} & 25 \\
mug88\_25 & 88 & 146 & 4 & 3 & 3 & 5 & \textbf{4} & 25 \\
mug100\_1 & 100 & 166 & 4 & 3 & 3 & 5 & \textbf{4} & 25 \\
mug100\_25 & 100 & 166 & 4 & 3 & 3 & 5 & \textbf{4} & 25 \\
mulsol.i.1 & 197 & 3925 & 49 & \textbf{49} & 4 & 122 & 63 & 22.22 \\
mulsol.i.2 & 188 & 3885 & ? & 31 & 11 & 157 & 58 & 46.55 \\
school1 & 385 & 19095 & 15 & 14 & 9 & 278 & 49 & 71.43 \\
school1\_nsh & 352 & 14612 & 14 & \textbf{14} & 8 & 231 & 40 & 65
\end{tabular}
\end{center}
\caption{COLORLIB instances (part 1)}
\label{INSTANCIAS1}
\end{table}
\begin{table}[h]
\begin{center}
\begin{tabular}{cccc|cc|cc|c}
 & & & & \multicolumn{2}{|c}{Lower Bound} & \multicolumn{2}{|c|}{Upper Bound} & \\
 {\footnotesize Name} & {\footnotesize Vert.} & {\footnotesize Edges} & {\footnotesize $\chi_{eq}$} & {\footnotesize $LB_{FC}$} &
 {\footnotesize $LB_{ELB}$} & {\footnotesize $UB_{(\ref{OREBOUND})}$} & {\footnotesize $UB_{NV}$} & {\footnotesize \% Rel. Gap.}\\
\hline
fpsol2.i.1 & 496 & 11654 & 65 & \textbf{65} & 3 & 253 & 85 & 23.53 \\
fpsol2.i.2 & 451 & 8691 & 47 & 30 & 5 & 347 & 62 & 51.61 \\
fpsol2.i.3 & 425 & 8688 & 55 & 30 & 7 & 347 & 80 & 62.5 \\
1-Insertions\_4 & 67 & 232 & 5 & 2 & 3 & 16 & \textbf{5} & 40 \\
2-Insertions\_3 & 37 & 72 & 4 & 2 & 3 & 7 & \textbf{4} & 25 \\
3-Insertions\_3 & 56 & 110 & 4 & 2 & 3 & 8 & \textbf{4} & 25 \\
4-Insertions\_3 & 79 & 156 & 4 & 2 & 2 & 9 & \textbf{4} & 50 \\
DSJC125.1 & 125 & 736 & 5 & 4 & 3 & 22 & 8 & 50 \\
DSJC125.5 & 125 & 3891 & ? & 9 & 9 & 75 & 27 & 66.67 \\
DSJC125.9 & 125 & 6961 & ? & 30 & 42 & 120 & 66 & 36.36 \\
DSJC250.1 & 250 & 3218 & ? & 4 & 3 & 37 & 13 & 69.23 \\
DSJC250.5 & 250 & 15668 & ? & 10 & 11 & 146 & 65 & 83.08 \\
DSJC250.9 & 250 & 27897 & ? & 37 & 63 & 235 & 136 & 53.68 \\
le450\_5a & 450 & 5714 & 5 & \textbf{5} & 3 & 41 & 12 & 58.33 \\
le450\_5b & 450 & 5734 & 5 & \textbf{5} & 4 & 41 & 12 & 58.33 \\
le450\_15a & 450 & 8168 & 15 & \textbf{15} & 5 & 89 & 18 & 16.67 \\
le450\_15b & 450 & 8169 & 15 & \textbf{15} & 5 & 91 & 17 & 11.76 \\
le450\_25a & 450 & 8260 & 25 & \textbf{25} & 5 & 118 & 26 & 3.85 \\
le450\_25b & 450 & 8263 & 25 & \textbf{25} & 6 & 107 & \textbf{25} & 0 \\
inithx.i.1 & 864 & 18707 & 54 & \textbf{54} & 3 & 503 & 70 & 22.86 \\
inithx.i.2 & 645 & 13979 & ? & 30 & 8 & 542 & 158 & 81.01 \\
myciel4 & 23 & 71 & 5 & 2 & 3 & 9 & \textbf{5} & 40 \\
myciel5 & 47 & 236 & 6 & 2 & 3 & 18 & 9 & 66.67 \\
myciel6 & 95 & 755 & ? & 2 & 3 & 36 & 11 & 72.73 \\
flat300\_20\_0 & 300 & 21375 & ? & 10 & 11 & 160 & 81 & 86.42 \\
latin\_sq\_10 & 900 & 307350 & ? & 90 & 82 & 684 & 460 & 80.43 \\
ash331GPIA & 662 & 4181 & 4 & 3 & 3 & 24 & 8 & 62.5 \\
will199GPIA & 701 & 6772 & 7 & 6 & 4 & 39 & 9 & 33.33 \\
kneser7\_2 & 21 & 105 & 6 & 3 & 3 & 11 & 8 & 62.5 \\
kneser7\_3 & 35 & 70 & 3 & 2 & 2 & 5 & \textbf{3} & 33.33 \\
kneser9\_4 & 126 & 315 & 3 & 2 & 2 & 6 & 4 & 50 \\
kneser11\_5 & 462 & 1386 & 3 & 2 & 2 & 7 & 4 & 50
\end{tabular}
\end{center}
\caption{COLORLIB instances (part 2) and Kneser graphs}
\label{INSTANCIAS2}
\end{table}
\begin{table}[h]
\begin{center} 
\tiny
\begin{tabular}{c|c|c@{\hspace{4pt}}c@{\hspace{4pt}}c|c@{\hspace{4pt}}c@{\hspace{4pt}}c@{\hspace{4pt}}c@{\hspace{4pt}}c|c@{\hspace{4pt}}c@{\hspace{4pt}}c@{\hspace{4pt}}c}
 & & \multicolumn{3}{|c}{Lower Bound} & \multicolumn{5}{|c}{Upper Bound} & \multicolumn{4}{|c}{Time} \\
Name & $\chi_{eq}$ & Init. & CPX & BC$LF_2$ & Init. & CPX & BC$LF_2$ & EqDS & EqDS$^*$ & CPX & BC$LF_2$ & EqDS & EqDS$^*$ \\ 
\hline
miles750 & 31 & 30 & 31 & 31 & 33 & 31 & 31 & 33 & 31 & 0 & 171 & $-$ & 0 \\ 
miles1000 & 42 & 40 & 42 & 42 & 47 & 42 & 42 & 47 & 42 & 0 & 267 & $-$ & 0 \\ 
miles1500 & 73 & 69 & 73 & 73 & 74 & 73 & 73 & 73 & 73 & 0 & 13 & 0 & 0 \\ 
zeroin.i.1 & 49 & 49 & 49 & 49 & 51 & 49 & 49 & 49 & 49 & 0 & 50 & 0 & 0 \\ 
zeroin.i.2 & 36 & 30 & 36 & 36 & 51 & 36 & 36 & 51 & 51 & 2 & 510 & $-$ & $-$ \\ 
zeroin.i.3 & 36 & 30 & 36 & 36 & 49 & 36 & 36 & 49 & 49 & 5 & 491 & $-$ & $-$ \\
queen6\_6 & 7 & 6 & 7 & 7 & 10 & 7 & 7 & 7 & 7 & 1 & 1 & 0 & 0 \\ 
queen7\_7 & 7 & 7 & 7 & 7 & 12 & 7 & 7 & 7 & 7 & 0 & 0 & 0 & 0 \\ 
queen8\_8 & 9 & 8 & 9 & 9 & 18 & 9 & 9 & 9 & 9 & 654 & 441 & 6 & 1 \\ 
queen8\_12 & 12 & 12 & 12 & & 20 & 12 & & 12 & 20 & 5 & & 3079 & $-$ \\ 
queen9\_9 & 10 & 9 & 9 & & 15 & 11 & & 10 & 10 & $-$ & & 475 & 499 \\ 
queen10\_10 & ? & 10 & 10 & & 18 & 12 & & 13 & 11 & $-$ & & $-$ & $-$ \\ 
david & 30 & 30 & 30 & 30 & 40 & 30 & 30 & 30 & 30 & 0 & 13 & 0 & 0 \\ 
1-FullIns\_3 & 4 & 3 & 4 & 4 & 7 & 4 & 4 & 4 & 4 & 0 & 2 & 0 & 0 \\ 
2-FullIns\_3 & 5 & 4 & 5 & 5 & 9 & 5 & 5 & 5 & 5 & 0 & 25 & 1 & 1 \\ 
3-FullIns\_3 & 6 & 5 & 6 & 6 & 7 & 6 & 6 & 7 & 7 & 0 & 85 & $-$ & $-$ \\ 
4-FullIns\_3 & 7 & 6 & 7 & 7 & 12 & 7 & 7 & 12 & 7 & 0 & 72 & $-$ & $-$ \\ 
5-FullIns\_3 & 8 & 7 & 8 & 8 & 9 & 8 & 8 & 9 & 9 & 0 & 268 & $-$ & $-$ \\ 
1-FullIns\_4 & 5 & 3 & 5 & & 7 & 5 & & 5 & 5 & 28 & & 1404 & 1412 \\ 
mug88\_1 & 4 & 3 & 4 & & 4 & 4 & & 4 & 4 & 1 & & 109 & 120 \\ 
mug88\_25 & 4 & 3 & 4 & & 4 & 4 & & 4 & 4 & 0 & & 56 & 60 \\ 
mug100\_1 & 4 & 3 & 4 & & 4 & 4 & & 4 & 4 & 1 & & 4425 & 4946 \\ 
mug100\_25 & 4 & 3 & 4 & & 4 & 4 & & 4 & 4 & 1 & & 4978 & 5595 \\ 
mulsol.i.1 & 49 & 49 & 49 & & 63 & 49 & & 49 & 49 & 1 & & 0 & 0 \\ 
mulsol.i.2 & ? & 31 & 34 & & 58 & 39 & & 58 & 58 & $-$ & & $-$ & $-$ \\ 
school1 & 15 & 14 & 14 & & 49 & 49 & & 49 & 49 & $-$ & & $-$ & $-$ \\ 
school1\_nsh & 14 & 14 & 14 & & 40 & 14 & & 23 & 40 & 1840 & & $-$ & $-$ \\ 
fpsol2.i.1 & 65 & 65 & 65 & & 85 & 65 & & 65 & 65 & 11 & & 0 & 0 \\ 
fpsol2.i.2 & 47 & 30 & 47 & & 62 & 62 & & 62 & 62 & $-$ & & $-$ & $-$ \\ 
fpsol2.i.3 & 55 & 30 & 55 & & 80 & 80 & & 80 & 80 & $-$ & & $-$ & $-$ \\ 
1-Insertions\_4 & 5 & 3 & 4 & & 5 & 5 & & 5 & 5 & $-$ & & 1055 & 1088 \\ 
2-Insertions\_3 & 4 & 3 & 4 & & 4 & 4 & & 4 & 4 & 0 & & 0 & 0 \\ 
3-Insertions\_3 & 4 & 3 & 4 & & 4 & 4 & & 4 & 4 & 8 & & 1 & 2 \\ 
4-Insertions\_3 & 4 & 2 & 4 & & 4 & 4 & & 4 & 4 & 836 & & 1615 & 1701 \\ 
DSJC125.1 & 5 & 4 & 5 & & 8 & 5 & & 5 & 5 & 214 & & 0 & 0 \\ 
DSJC125.5 & ? & 9 & 13 & & 27 & 27 & & 19 & 19 & $-$ & & $-$ & $-$ \\ 
DSJC125.9 & ? & 42 & 43 & & 66 & 47 & & 47 & 47 & $-$ & & $-$ & $-$ \\ 
DSJC250.1 & ? & 4 & 5 & & 13 & 13 & & 9 & 9 & $-$ & & $-$ & $-$ \\ 
DSJC250.5 & ? & 11 & 12 & & 65 & 65 & & 36 & 65 & $-$ & & $-$ & $-$ \\ 
DSJC250.9 & ? & 63 & 63 & & 136 & 136 & & 86 & 86 & $-$ & & $-$ & $-$ \\ 
le450\_5a & 5 & 5 & 5 & & 12 & 5 & & 12 & 10 & 4558 & & $-$ & $-$ \\ 
le450\_5b & 5 & 5 & 5 & & 12 & 5 & & 12 & 12 & 4305 & & $-$ & $-$ \\ 
le450\_15a & 15 & 15 & 15 & & 18 & 18 & & 17 & 17 & $-$ & & $-$ & $-$ \\ 
le450\_15b & 15 & 15 & 15 & & 17 & 17 & & 16 & 16 & $-$ & & $-$ & $-$ \\ 
le450\_25a & 25 & 25 & 25 & & 26 & 25 & & 25 & 25 & 54 & & 0 & 0 \\ 
inithx.i.1 & 54 & 54 & 54 & & 70 & 54 & & 55 & 55 & 63 & & $-$ & $-$ \\ 
inithx.i.2 & ? & 30 & 30 & & 158 & 158 & & 158 & 158 & $-$ & & $-$ & $-$ \\ 
myciel4 & 5 & 3 & 5 & 5 & 5 & 5 & 5 & 5 & 5 & 0 & 5 & 0 & 0 \\ 
myciel5 & 6 & 3 & 6 & & 9 & 6 & & 6 & 6 & 149 & & 0 & 0 \\ 
myciel6 & ? & 3 & 6 & & 11 & 7 & & 7 & 8 & $-$ & & $-$ & $-$ \\ 
flat300\_20\_0 & ? & 11 & 11 & & 81 & 81 & & 81 & 81 & $-$ & & $-$ & $-$ \\ 
latin\_sq\_10 & ? & 90 &  & & 460 & 460 & & 460 & 460 & $-$ & & $-$ & $-$ \\ 
ash331GPIA & 4 & 3 & 4 & & 8 & 8 & & 8 & 4 & $-$ & & $-$ & 1 \\ 
will199GPIA & 7 & 6 & 7 & & 9 & 9 & & 9 & 7 & $-$ & & $-$ & 2 \\ 
kneser7\_2 & 6 & 3 & 6 & 6 & 8 & 6 & 6 & 6 & 6 & 0 & 6 & 0 & 0 \\ 
kneser7\_3 & 3 & 2 & 3 & 3 & 3 & 3 & 3 & 3 & 3 & 0 & 2 & 0 & 0 \\ 
kneser9\_4 & 3 & 2 & 3 & 3 & 4 & 3 & 3 & 3 & 3 & 0 & 809 & 0 & 0 \\ 
kneser11\_5 & 3 & 2 & 3 & & 4 & 3 & & 3 & 4 & 84 & & 2128 & $-$
\end{tabular}
\end{center}
\caption{Performance of the algorithms on COLORLIB instances and Kneser graphs}
\label{INSTANCIASFINALES}
\end{table}

Table \ref{INSTANCIASFINALES} reports the final results. Columns 1-2 display the name of the instance and its equitable chromatic number.
Columns 3-5 and 6-10 show the bounds given by the initial heuristics and the bounds obtained by each algorithm after its execution.
Finally, columns 11-14 show the time needed to solve the instance, or ``$-$'' if the algorithm is not able to solve it within the limit of
two hours. Columns called ``EqDS'' and ``EqDS$^*$'' correspond to \textsc{EqDSatur} with DSATUR-CSS and ORDER1-CSS respectively.

Results for B\&C-$LF_2$ are taken from \cite{BYCBRA}. We leave blank when the instance is not mentioned in that paper. We also recall that
these results had been obtained with a slightly different platform: an 1.8 GHz AMD-Atlon machine with Linux and XPRESS 2005-a as the linear programming solver.

From the 58 evaluated instances, CPLEX has solved 38, \textsc{EqDSatur} with DSATUR-CSS has solved 29 and with
ORDER1-CSS has solved 31. However, some of the instances not solved by both versions of \textsc{EqDSatur}
(more precisely, \texttt{3-FullIns\_3}, \texttt{4-FullIns\_3} and \texttt{5-FullIns\_3}) are indeed hard to solve by enumerative schemes,
as reported in \cite{PASS}, so in our opinion \textsc{EqDSatur} presents the expected behaviour.
On the other hand, both versions of \textsc{EqDSatur} outperform CPLEX and B\&C-$LF_2$ in \texttt{queen8\_8}, and CPLEX in \texttt{myciel5}
and \texttt{queen9\_9}. In particular, the version with ORDER1-CSS outperforms B\&C-$LF_2$ in \texttt{miles750} and \texttt{miles1000}.

Let us note that DSATUR-CSS delivers a faster algorithm than ORDER1-CSS for the set of instances solved by both.
Also, it is able to solve \texttt{queen8\_12} and \texttt{kneser11\_5} in more than half an hour. Nevertheless, by using
ORDER1-CSS, \texttt{miles750}, \texttt{miles1000}, \texttt{ash331GPIA} and \texttt{will199GPIA} can be solved without difficulty.


\section{Conclusions} \label{SCONCLU}

In this paper, we present and analyze an exact DSatur-based algorithm for ECP. We propose a pruning rule based on arithmetical
properties related to equitable partitions, which has shown to be very effective. We also discuss several color and vertex
selection strategies and how to obtain lower and upper bounds of the equitable chromatic number for initializing the algorithm.
Finally, several experiments were carried out to conclude that
our approach can tackle the resolution of random graphs better than other algorithms found in the literature so far.




\end{document}